\newtheorem{Theorem}{Theorem}[section]
\newtheorem{Lemma}[Theorem]{Lemma}
\newtheorem{Corollary}[Theorem]{Corollary}
\newtheorem{Example}[Theorem]{Example}
\newproof{proof}{Proof}
\newcommand{\Fq}{\mathbb{F}_q}
\newcommand{\angbra}[1]{\langle{#1}\rangle}
\journal{Finite Fields and Their Applications}
\begin{document}

\begin{frontmatter}

%% Title, authors and addresses

%% use the tnoteref command within \title for footnotes;
%% use the tnotetext command for theassociated footnote;
%% use the fnref command within \author or \address for footnotes;
%% use the fntext command for theassociated footnote;
%% use the corref command within \author for corresponding author footnotes;
%% use the cortext command for theassociated footnote;
%% use the ead command for the email address,
%% and the form \ead[url] for the home page:
%% \title{Title\tnoteref{label1}}
%% \tnotetext[label1]{}
%% \author{Name\corref{cor1}\fnref{label2}}
%% \ead{email address}
%% \ead[url]{home page}
%% \fntext[label2]{}
%% \cortext[cor1]{}
%% \address{Address\fnref{label3}}
%% \fntext[label3]{}

\title{A Generalization of Quasi-twisted Codes:\\ Multi-twisted codes}

%% use optional labels to link authors explicitly to addresses:
 \author[label1]{Nuh Aydin}
  \author[label2]{Ajdin Halilovi\'c}
\address[label1]{Kenyon College, Department of Mathematics, Gambier, OH 43022}
 \address[label2]{Lumina-The University of South-East Europe, \cb{S}os. Colentina 64b, 021187 Bucharest, Romania}

\begin{abstract}
Cyclic codes and their various generalizations, such as quasi-twisted (QT) codes, have a special place in algebraic coding theory. Among other things, many of the best-known or optimal codes have been obtained from these classes. In this work we introduce a new generalization of QT codes that we call multi-twisted (MT) codes and study some of their basic properties. Presenting several methods of constructing codes in this class and obtaining bounds on the minimum distances, we show that  there exist codes with good parameters in this class that cannot be obtained as QT or constacyclic codes. This suggests that considering this larger class in computer searches is promising for constructing codes with better parameters than currently best-known linear codes. Working with this new class of codes motivated us to consider a problem about binomials over finite fields and to discover a result that is interesting in its own right. 
\end{abstract}

\begin{keyword}
constacyclic codes \sep quasi-twisted codes \sep best-known codes 
%% keywords here, in the form: keyword \sep keyword

%% PACS codes here, in the form: \PACS code \sep code
\MSC 94B15 \sep 94B60 \sep 94B65 
%% MSC codes here, in the form: \MSC code \sep code
%% or \MSC[2008] code \sep code (2000 is the default)

\end{keyword}

\end{frontmatter}

%% \linenumbers

%% main text

\section{Introduction and Motivation}
Every linear code over a finite field $\mathbb{F}_q$ has three basic parameters: the length ($n$), the dimension ($k$), and the minimum distance ($d$) that determine the quality of the code. One of the most important and challenging problems of coding theory is a discrete optimization problem:  determine the optimal values of these parameters and construct codes whose parameters attain the optimal values. This optimization problem is very difficult. In general, it is only solved for the cases where either $k$ or $n-k$ is small. There is a database of best known linear codes with upper bounds on minimum distances that is available online \cite{table}. The database is updated as new codes are discovered and reported by researchers. 
   
Computers are often used in searching for codes with best parameters but there is an inherent difficulty: computing the minimum distance of a linear code is computationally intractable (NP-hard) \cite{NPhard}. Since it is not possible to conduct exhaustive searches for linear codes if the dimension is large, researchers often focus on promising subclasses of linear codes with rich mathematical structures. A promising and fruitful approach has been to focus on the class of quasi-twisted (QT) codes which includes cyclic, constacyclic, and quasi-cyclic (QC) codes as special cases.  This class of codes is known to contain many codes with good parameters. In the last few decades, a large number of record-breaking QC and QT codes have been constructed (e.g. \cite{qc1}-\cite{qc7}). The search algorithm  introduced in \cite{qtmain} has been highly effective and used in several subsequent works (\cite{qc3}-\cite{qc7}).

In this work, we introduce a new generalization of QT codes that we call multi-twisted (MT) codes. It turns out that this class  also generalizes more recently introduced classes of double cyclic codes (\cite{qc1,qc2}), QCT codes (\cite{qct}), and GQC codes (\cite{gqc}). After deriving some of their algebraic properties and obtaining a lower bound on the minimum distance, we show that from this class we can obtain linear codes with best-known or optimal parameters that cannot be obtained from the smaller classes of constacyclic or QT codes. 

Before introducing this new class of codes, we recall some fundamental  results about constacyclic and QT codes that will be needed later.

\section{Constacyclic and Quasi-twisted Codes}

Constacyclic codes are very well-known in algebraic coding theory. Let $a\in \Fq^{*}=\Fq\setminus \{0\}$. A
linear code $C$ over a finite field $\Fq$ is called constacyclic with shift constant $a$ if it is closed under
the constacyclic shift, i.e. for any $(c_0, c_1, \ldots, c_{n-1})\in C$, $T_a(c_0, c_1, \ldots, c_{n-1}) :=
(c_{n-1}, c_0, c_1, \ldots, c_{n-2})\in C$.  When $a=1$, we obtain the very important special case of  cyclic codes. Many well-known  codes are instances of cyclic codes.

Under the usual  isomorphism $\pi : \Fq^n \to \Fq[x] / \langle x^n - a\rangle$, where $\pi(u) = u_0 + u_1 x + u_2 x^2 + \ldots + u_{n-1}
x^{n-1}$, for $u=(u_0,u_1,\dots,u_{n-1}) \in \Fq^n$, it is
well-known that a constacyclic code is an ideal in the ring $\Fq[x]/ \langle x^n-a \rangle$. Moreover, for every constacyclic code $C$ there is a unique, monic polynomial of least degree in $C$ that generates $C$, i.e. 
$C=\langle g(x)\rangle=\{ f(x)g(x)\mod x^n-a: f(x)\in \Fq[x] \}$. This standard generator is a divisor of $x^n-a$, so that $x^n-a=g(x)h(x)$, for some $h(x)\in \Fq[x]$ which is called the check polynomial of $C$. 
Note that the set of all codewords can be described as
 $C=\{ f(x)g(x): f(x)\in \mathbb{F}_q[x] \text{ and } \deg(f(x))<\deg(h(x)) \}$, i.e. the set $\{ g(x),xg(x),x^2g(x),\dots,x^{k-1}g(x) \}$ is a basis for $C$ where $k=\deg(h(x))$.  

A constacyclic code has many other generators as well and we have a complete characterization of them  as follows.

\begin{Lemma} \cite{qtmain}  Let $C=\langle g(x)\rangle$ be a constacyclic code of length $n$ and shift constant $a$ with canonical generator $g(x)$ and check polynomial $h(x)$ so that $x^n-a=g(x)h(x)$. Then $C=\langle g^{\prime}(x)\rangle$ if and only if $g^{\prime}(x)$ is of the form $g(x)p(x)$ with $\gcd(p(x),h(x))=1$.
\end{Lemma}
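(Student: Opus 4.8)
The plan is to pass to the quotient ring $R=\Fq[x]/\langle x^n-a\rangle$, in which $C=\langle g(x)\rangle$ is an ideal, and to prove the two implications separately. The only tools needed are the factorization $x^n-a=g(x)h(x)$ and B\'ezout's identity in the polynomial ring $\Fq[x]$; in particular, no hypothesis on $\gcd(n,q)$ is required.

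For the ``if'' direction, suppose $g'(x)=g(x)p(x)$ in $R$ with $\gcd(p(x),h(x))=1$. The inclusion $\langle g'(x)\rangle\subseteq\langle g(x)\rangle=C$ is immediate because $g'$ is a multiple of $g$. For the reverse inclusion I would invoke B\'ezout: write $u(x)p(x)+v(x)h(x)=1$ for suitable $u,v\in\Fq[x]$, multiply through by $g(x)$, and use $g(x)h(x)=x^n-a\equiv 0$ to obtain $g(x)\equiv u(x)\,g'(x)\pmod{x^n-a}$. Hence $g(x)\in\langle g'(x)\rangle$, so $C\subseteq\langle g'(x)\rangle$ and the two ideals coincide.

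For the ``only if'' direction, assume $\langle g'(x)\rangle=C$. Since $g'(x)\in C=\langle g(x)\rangle$, there is a polynomial $p(x)$ with $g'(x)\equiv g(x)p(x)\pmod{x^n-a}$, which is exactly the asserted form. To see that $\gcd(p(x),h(x))=1$ I would exploit the other containment: $g(x)\in C=\langle g'(x)\rangle$ yields $q(x)$ with $g(x)\equiv g'(x)q(x)\equiv g(x)p(x)q(x)\pmod{x^n-a}$, so $g(x)\bigl(1-p(x)q(x)\bigr)$ is a multiple of $x^n-a=g(x)h(x)$. Cancelling $g(x)$ — legitimate since $\Fq[x]$ is an integral domain and $g\neq 0$ — gives $p(x)q(x)+h(x)s(x)=1$ for some $s(x)$, whence $\gcd(p(x),h(x))=1$.

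The argument is essentially formal and short; the points that need care are (i) observing that the representative $p(x)$ of $g'$ modulo $x^n-a$ is only well-defined modulo $h(x)$, yet the condition $\gcd(p(x),h(x))=1$ is still meaningful because $\gcd(p+hw,h)=\gcd(p,h)$, and (ii) correctly moving between congruences in $R$ and equalities in $\Fq[x]$, i.e.\ bookkeeping the stray multiples of $x^n-a$. I expect step (ii) in the ``only if'' direction — extracting coprimality from the two-sided containment — to be the place a careless reader could slip, although it is not genuinely hard.
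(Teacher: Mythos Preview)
Your argument is correct and is the standard one. However, note that the paper does not actually prove this lemma: it is quoted from \cite{qtmain} and stated without proof, so there is no in-paper argument to compare against. Your B\'ezout-based proof is exactly the kind of argument one expects for this classical fact about principal ideals in $\Fq[x]/\langle x^n-a\rangle$, and your remarks on the well-definedness of $p(x)$ modulo $h(x)$ and on cancelling $g(x)$ in the integral domain $\Fq[x]$ are apt.
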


A linear code $C$ is said to be $\ell$-quasi-twisted ($\ell$-QT) if, for a
positive integer $\ell$, it is invariant under $T_a ^{\ell}$, that is, whenever $(c_0,c_1,\dots,c_{n-1})\in C$,
then $(ac_{n-\ell},\dots,ac_{n-1},c_0,c_1,\dots,c_{n-\ell-1})\in
C$ as well. It is important to note that when $\gcd(\ell,n)=1$ we obtain
constacyclic codes. We  therefore assume that $\ell\mid n$, the
case $\ell=1$ corresponding to constacyclic codes.

It is well known that algebraically a QT code $C$ of length $n=m\ell$ with shift constant $a$ is an $R$-submodule of $R^\ell$ where $R=\Fq[x]/ \langle x^m-a \rangle$. If $C$ has a single generator of the form $(f_1(x),f_2(x),\dots,f_{\ell}(x))$ then it is called a 1-generator QT code, otherwise a multi-generator QT code. Most of the literature on QT codes focuses on the 1-generator case. We will do the same in this work.

A certain type of 1-generator QT codes, sometimes called degenerate QT codes, is particularly useful and promising when searching for new linear codes.  This is due to the lower bound on the
minimum distance given in the following theorem. 
\begin{Theorem}[see \cite{qtmain}]\label{thm:lowerboundQT}%{3.2.3}
\label{main} Let $C$ be a $1$-generator $\ell$-QT code of
length $n=m\ell$ with a generator of the form:
\begin{equation}\label{eq:one-generator-special-form}
(f_1(x)g(x),f_2(x)g(x),\cdots,f_{\ell}(x)g(x)),
\end{equation}
where $g(x),f_i(x) \in {\Fq[x]}/{\angbra{x^m-a}}$, such
that $x^m-a=g(x)h(x)$ and $f_i(x)$ is relatively prime
to $h(x)$  for each $i=1,\dots,\ell$. Then $C$ is an $[n,k,d']_q$-code where
$k=m-\deg(g(x))$, $d'\geq \ell\cdot d(C_g)$, and $d(C_g)$
denotes the minimum distance of the constacyclic code of length $m$
generated by $g(x)$.
\end{Theorem}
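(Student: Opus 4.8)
The plan is to present $C$ as the image of an $R$-linear map and then control both its dimension (via the kernel) and the weights of its nonzero elements componentwise, where $R = \mathbb{F}_q[x]/\langle x^m-a\rangle$. Define $\phi\colon R \to R^\ell$ by $\phi(u) = (uf_1g,\, uf_2g,\, \ldots,\, uf_\ell g)$. By definition of a $1$-generator QT code, $C = \phi(R)$, so every codeword has the form $c = (uf_1g, \ldots, uf_\ell g)$ for some $u \in R$, and we identify each component with a vector in $\mathbb{F}_q^m$ so that $w(c) = \sum_{i=1}^\ell w(uf_ig)$.

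First I would pin down the kernel. Since $\phi$ is $R$-linear, $\ker\phi$ is an ideal of $R$; and $u \in \ker\phi$ iff $(x^m-a)\mid uf_ig$ in $\mathbb{F}_q[x]$ for every $i$, i.e. $h \mid uf_i$ for every $i$ after cancelling $g$ from $x^m-a = gh$. Here the hypothesis $\gcd(f_i,h)=1$ enters: $h \mid uf_i \iff h \mid u$. Hence $\ker\phi = \langle h(x)\rangle$, the constacyclic code generated by $h$, of dimension $m - \deg h$, so $\dim C = m - (m - \deg h) = \deg h = m - \deg g = k$.

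The distance bound then rests on an ``all or nothing'' observation. Fix $u$ with $c = \phi(u) \neq 0$. By the computation above, the $i$-th component $uf_ig \bmod (x^m-a)$ vanishes iff $h \mid uf_i$ iff $h \mid u$ --- a condition independent of $i$. Since $c \neq 0$, at least one component is nonzero, hence $h \nmid u$, hence \emph{every} component $uf_ig$ is a nonzero element of the constacyclic code $C_g = \langle g(x)\rangle$ of length $m$, and therefore has Hamming weight at least $d(C_g)$. Summing, $w(c) = \sum_{i=1}^\ell w(uf_ig) \geq \ell\cdot d(C_g)$; taking the minimum over nonzero codewords gives $d' \geq \ell\cdot d(C_g)$.

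The argument is short once the setup is right; the step I would flag as the crux is the passage from ``$h \mid uf_i$'' to ``$h \mid u$'' using coprimality. This is precisely what forces the components to vanish simultaneously and thus yields the full factor $\ell$: without it one could only conclude that the nonzero components each have weight $\geq d(C_g)$, with no lower bound on how many components are nonzero, giving merely $d' \geq d(C_g)$.
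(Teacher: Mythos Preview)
Your argument is correct. The paper itself does not give an in-line proof of this theorem; it simply states that ``A proof of this theorem is given in \cite{qtmain}'' and moves on. Your write-up supplies exactly the standard argument one expects from that reference: realize $C$ as the image of the $R$-linear map $u\mapsto(uf_1g,\dots,uf_\ell g)$, compute $\ker\phi=\langle h\rangle$ via the coprimality hypothesis to obtain the dimension, and then observe that the vanishing condition $h\mid u$ for each component is independent of $i$, so in any nonzero codeword every block is a nonzero element of $C_g$ and hence has weight at least $d(C_g)$.

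This same ``all blocks vanish simultaneously or none do'' idea is precisely what the paper reuses (and generalizes) in its proofs of the later MT results, Theorems~\ref{mainthm} and~\ref{mainthm2}. So your proof is not only correct but aligned with the paper's own methodology; it is just that the paper chose to outsource this particular instance to the cited source rather than repeat it.
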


 A proof of this theorem is given in \cite{qtmain}. In reality, the actual
 minimum distance of $C$ is often considerably larger than the lower
 bound given by the theorem. Researchers designed algorithms and conducted computer searches based on this theorem and discovered  many new linear codes (\cite{qc3}-\cite{qc7}).

\section{Multi-twisted Codes}

We propose an even more general class of linear codes, which we call
\textit{multi-twisted} (MT) codes.  A \textit{multi-twisted
module} $V$ is an $\Fq[x]$-module of the form 
\[V = {\displaystyle\prod_{i=1}^\ell \Fq[x] / \langle x^{m_i} - a_i \rangle},\] where $a_i
\in \Fq\setminus \{0\}$ and $m_i $ are (possibly distinct) positive integers. An MT code is an $\Fq[x]$-submodule of a multi-twisted module $V$. Equivalently, we can define an MT code in terms of the shift of a codeword. Namely, a linear code $C$ is multi-twisted if for any codeword 
$$\vec{c}=(c_{1,0},\dots, c_{1,{m_1-1}}; c_{2,0},\dots, c_{2,{m_2-1}};\dots;c_{\ell,0},\dots, c_{\ell,{m_{\ell}-1}})\in C,$$
\noindent its multi-twisted shift
$$(a_1c_{1,m_1-1},c_{1,0},\dots, c_{1,{m_1-2}}; a_2c_{2,m_2-1},c_{2,0},\dots, c_{2,{m_2-2}};\dots; a_{\ell}c_{\ell,m_{\ell}-1},\dots, c_{\ell,{m_{\ell}-2}})$$

\noindent is also a codeword. If we identify a vector $\vec{c}$ with $C(x)=(c_1(x),c_2(x),\dots,c_{\ell}(x))$ where $c_i(x)=c_{i,0}+c_{i,1}x+\cdots+c_{i,m_i-1}x^{m_i-1}$, then the MT shift corresponds to $xC(x)=(xc_1(x) \mod x^{m_1}-a_1, \dots, xc_{\ell}(x) \mod x^{m_{\ell}}-a_{\ell})$.

Note that cyclic, constacyclic, QC, and QT  codes are all (permutation equivalent to) special cases of MT codes. For example, QT codes are obtained as a special case when 
$m_1=m_2=\dots=m_{\ell}$ and $a_1=a_2=\dots=a_{\ell}$. Moreover, more recently introduced classes of codes called generalized quasi-cyclic (GQC) codes \cite{gqc}, QCT codes \cite{qct}, and double cyclic codes \cite{dc1,dc2} can be viewed as special cases of MT codes. 

 An MT code is a one-generator code if it is generated by a single element of $V$. This work will focus primarily on one-generator MT codes.

Our next goal about MT codes is to find a lower bound on the minimum distance of a 1-generator MT code similar to the one in Theorem \ref{thm:lowerboundQT}. This leads to considering the greatest common divisor of two binomials $x^{n_1}-a_1$ and $x^{n_2}-a_2$. Working on this question, we discovered a result about the greatest common divisor of two binomials $x^{n_1}-a_1$ and $x^{n_2}-a_2$ over $\Fq$ which we believe is a new result about polynomials over finite fields, and interesting in its own right. We state and prove this result in the next section.

\section{A Result About Binomials over Finite Fields}

Considering generators of MT codes brings up the problem of determining the greatest common divisor of two binomials of the form $x^{n_1}-a_1$ and $x^{n_2}-a_2$. We found that the $\gcd$ of two such polynomials is either 1, or another binomial of the same form. The precise statement and a proof are as follows. 

\begin{Theorem}\label{gcd}
Let $P=\{f\in \mathbb{F}_q[x]: f=x^m-a \text{~with~} a\in \mathbb{F}_q^{\ast}, m\in\mathbb{N}$, $\gcd(m,q)=1\}.$ Then for $f,g\in P$, {\rm gcd}$(f,g)$ is either 1 or of the form $x^{{\rm gcd}({\rm deg}(f),{\rm deg}(g))}-a$, for some $a\in \mathbb{F}_q$. In particular, {\rm gcd}$(f,g)$ $\in P\cup\{1\}$.
\end{Theorem}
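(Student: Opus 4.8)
The plan is to work in the splitting field and use the structure of roots of binomials, which are cosets of groups of roots of unity. Let $f = x^{m}-a$ and $g = x^{n}-b$ with $\gcd(m,q) = \gcd(n,q) = 1$, and set $d = \gcd(m,n)$. Since $\gcd(m,q)=1$, the polynomial $f$ is separable, so over a suitable extension $\mathbb{F}_{q^t}$ it factors as $f = \prod_{\zeta}(x-\zeta)$ where the roots $\zeta$ form a coset $\alpha\mu_m$ of the group $\mu_m$ of $m$-th roots of unity (here $\alpha$ is any fixed root of $f$). Likewise the roots of $g$ form a coset $\beta\mu_n$. The crucial observation is that $\gcd(f,g)$, being a divisor of the separable polynomial $f$, is itself separable, and its roots are exactly $\alpha\mu_m \cap \beta\mu_n$. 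So everything reduces to understanding this intersection of two cosets.

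**The group-theoretic core.**

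First I would show that if $\alpha\mu_m \cap \beta\mu_n$ is nonempty, it is a coset of $\mu_m \cap \mu_n = \mu_d$. Indeed, if $\gamma$ lies in the intersection, then $\alpha\mu_m = \gamma\mu_m$ and $\beta\mu_n = \gamma\mu_n$, so the intersection equals $\gamma\mu_m \cap \gamma\mu_n = \gamma(\mu_m \cap \mu_n) = \gamma\mu_d$, using that $\mu_m$ and $\mu_n$ are the cyclic groups of $m$-th and $n$-th roots of unity inside the (cyclic) multiplicative group of the algebraic closure, whence their intersection is $\mu_{\gcd(m,n)} = \mu_d$. Therefore in the nonempty case, $\gcd(f,g) = \prod_{\eta\in\gamma\mu_d}(x-\eta)$, which is precisely $x^{d} - \gamma^{d}$: expanding the product over the coset $\gamma\mu_d$ gives a binomial because $\prod_{\omega\in\mu_d}(x-\gamma\omega) = \gamma^{d}\prod_{\omega\in\mu_d}(x/\gamma - \omega) = \gamma^{d}\big((x/\gamma)^{d}-1\big) = x^{d}-\gamma^{d}$. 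It remains to check that the constant $\gamma^{d}$ lies in $\mathbb{F}_q$, not just in the extension: this follows because $\gcd(f,g)$ is the gcd of two polynomials with coefficients in $\mathbb{F}_q$, hence itself has coefficients in $\mathbb{F}_q$ (the Euclidean algorithm stays in $\mathbb{F}_q[x]$). In the empty case, the two separable polynomials share no root, so $\gcd(f,g) = 1$.

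**Anticipated obstacle and cleanup.**

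The main obstacle is handling separability correctly so that "common divisor" really translates to "common root without multiplicity" — this is exactly where the hypothesis $\gcd(m,q)=1$ is essential, and I would make sure to invoke it explicitly (the derivative of $x^m-a$ is $mx^{m-1}$, nonzero and coprime to $x^m-a$ since $a\neq 0$). A secondary subtlety is being careful that $\mu_m \cap \mu_n = \mu_d$ really holds: this uses that all roots of unity of order prime to $q$ sit inside the multiplicative group of $\overline{\mathbb{F}_q}$, and every finite subgroup of the multiplicative group of a field is cyclic, so the lattice of these subgroups is the divisibility lattice. Once these two points are pinned down, the rest is the short coset computation above. Finally, I would note that the statement "$\gcd(f,g) \in P \cup \{1\}$" follows immediately: in the nonempty case we produced $x^{d}-\gamma^{d}$ with $d = \gcd(\deg f, \deg g)$ dividing $m$ (hence $\gcd(d,q)=1$) and $\gamma^{d}\in\mathbb{F}_q^{\ast}$ (it is nonzero since $\gamma\neq 0$), so it lies in $P$; otherwise the gcd is $1$.
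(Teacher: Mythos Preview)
Your proof is correct and follows essentially the same route as the paper: both work in a splitting field, identify the root sets of $f$ and $g$ as cosets of the cyclic groups $\mu_m$ and $\mu_n$ of roots of unity, and observe that the intersection of two such cosets (when nonempty) is a coset of $\mu_m\cap\mu_n=\mu_d$, whence $\gcd(f,g)=x^d-\gamma^d$.

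The one genuine difference is in how you establish $\gamma^d\in\mathbb{F}_q$. The paper writes $m=dt_1$, $n=dt_2$ with $\gcd(t_1,t_2)=1$, takes a B\'ezout relation $ut_1+vt_2=1$, and computes $\gamma^d=\gamma^{udt_1+vdt_2}=a^{u}b^{v}\in\mathbb{F}_q$ explicitly. You instead observe that the gcd of two polynomials in $\mathbb{F}_q[x]$ (via the Euclidean algorithm) already lies in $\mathbb{F}_q[x]$, so its constant term is automatically in $\mathbb{F}_q$. Your argument is shorter and avoids the B\'ezout computation; the paper's argument has the small bonus of giving an explicit formula $\gamma^d=a^{u}b^{v}$ for the constant. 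You are also more explicit than the paper about why separability (hence the hypothesis $\gcd(m,q)=1$) is needed to pass between common divisors and common roots.
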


\begin{proof}
We will let  $ord_m(q)$ denote the multiplicative order of $q \mod m$, i.e. it is the smallest positive integer $k$ such that $q^k \equiv 1 \mod m$. For a non-zero element $a\in \mathbb{F}^{*}_q$, $|a|$ denotes the order of $a$ in the multiplicative group of $(\mathbb{F}^{*}_q,\cdot)$. 
Let $f=x^{n_1}-a_1$ and $g=x^{n_2}-a_2$ for some $n_1, n_2 \in \mathbb{N}$ and non-zero elements $a_1,a_2\in \mathbb{F}_q$ with $r_1=|a_1|$ and $r_2=|a_2|$. Let $s_1=ord_{n_1r_1}(q)$ and $s_2=ord_{n_2r_2}(q)$. It is known that $f$ and $g$ split into linear factors in the extensions $F_{q^{s_1}}$ and $F_{q^{s_2}}$, respectively. In order to find a common extension of these two fields, consider $s=ord_{n_1r_1n_2r_2}(q)$. We claim that $F_{q^{s_1}}, F_{q^{s_2}}\subseteq F_{q^s}$, that is, $s_1|s$ and $s_2|s$. Indeed, since $n_1r_1n_2r_2|q^s-1$, we have, in particular, $n_1r_1|q^s-1$. Since $s_1=ord_{n_1r_1}(q)$, it follows that $s_1|s$. Similarly we obtain $s_2|s$.

Let $\zeta\in F_{q^s}$ be a primitive $n_1n_2$th root of unity. Then $\zeta^{n_2}$ and $\zeta^{n_1}$ are primitive $n_1$th and $n_2$th roots of unity, respectively. 

If $f$ and $g$ do not have a common root, then gcd$(f,g)=1$, and we are done. Suppose now that there exists a common root $\delta$ of $f$ and $g$, that is,
$\delta$ is an $n_1$th root of $a_1$ and $n_2$th root of $a_2$. 
It follows that the roots of $f$ are \[\delta, \delta\zeta^{n_2}, \delta(\zeta^{n_2})^2,\dots,\delta(\zeta^{n_2})^{n_1-2} ~{\rm and}~ \delta(\zeta^{n_2})^{n_1-1},\] and the roots of $g$ are \[\delta, \delta\zeta^{n_1}, \delta(\zeta^{n_1})^2,\dots,\delta(\zeta^{n_1})^{n_2-2} ~{\rm and}~ \delta(\zeta^{n_1})^{n_2-1}.\]

The set of roots of $\gcd(f,g)$ is the intersection of the two sets above. Note that these sets are actually cosets of the multiplicative subgroups\\
 $\{1, \zeta^{n_2}, (\zeta^{n_2})^2,\dots,(\zeta^{n_2})^{n_1-1}\}$ and $\{1, \zeta^{n_1}, (\zeta^{n_1})^2,\dots,(\zeta^{n_1})^{n_2-1} \}$ of $F_{q^s}$, respectively. It follows that the set of roots of $\gcd(f,g)$ is a coset of  the intersection of these subgroups. Therefore, $\zeta^{n_1n_2/d }$ is a generator (primitive element) of the subgroup of the intersection where $d=\gcd(n_1,n_2)$. Hence, the roots of $\gcd(f,g)$ are \[\delta, \delta\zeta^{n_1n_2/d}, \delta(\zeta^{n_1n_2/d})^2,\dots,\delta(\zeta^{n_1n_2/d})^{{d-1}}.\]
Therefore, $\deg(\gcd(f,g)) = \gcd(n_1,n_2) = \gcd(\deg(f), \deg(g))$. Finally, we show that $a:=\delta^d\in \mathbb{F}_q$ to prove that $\gcd(f,g)$ is of the desired form $x^m-a$, hence completing the proof. Write $n_1=dt_1$ and $n_2=dt_2$ for some relatively prime integers $t_1,t_2$. 
Since $\gcd(t_1,t_2)=1$, there exist integers $u,v$ such that $ut_1+vt_2=1$. We know that $\delta^{dt_1}=a_1$ and $\delta^{dt_2}=a_2$, which implies
 \[ a=\delta^d=\delta^{udt_1+vdt_2}=a_1^u\cdot a_2^v \in \mathbb{F}_q. \]

\end{proof}

\begin{Example}
 Over $\mathbb{F}_7$, $\gcd(x^{10} -4, x^{15} -1)=x^5-2$,   $\gcd(x^{11}-5,x^{16}-4)= x-3$ and $\gcd(x^{12}-3,x^{15}-4)=1$.
\end{Example}

This theorem has some implications for constacyclic codes. Suppose $a_1,a_2\in \Fq$ and $n_1,n_2\in \mathbb{Z}^{+}$ are such that 
$\gcd(x^{n_1}-a_1,x^{n_2}-a_2)=x^m-a$ for some non-zero $m \in \mathbb{Z}$ and $a\in \Fq$. Any constacyclic code $C$ of length $m$ with shift constant $a$ has a generator $g(x)$ that divides $x^m-a$, hence $g(x)|(x^{n_1}-a_1)$ and $g(x)|(x^{n_2}-a_2)$. Therefore, we observe that the polynomial $g(x)$ can also be regarded as the (standard) generator of a constacyclic code $C_1$ of length $n_1$ with shift constant $a_1$  as well as the generator  of a constacyclic code $C_2$ of length $n_2$ with shift constant $a_2$.

\section{More on Constructions of MT Codes and Their Parameters }

We frequently have $\gcd(x^{n_1}-a_1,x^{n_2}-a_2)=1$. Consider a 1-generator MT code $C$ in this case with a generator of the form $\langle g_1(x), g_2(x)\rangle$ where $g_1(x)|x^{n_1}-a_1$ and $g_2(x)|x^{n_2}-a_2$. Clearly, $\gcd(g_1(x),g_2(x))=1$ as well. If $C_1=\langle g_1(x) \rangle$ with parameters $[n_1,k_1,d_1]$ and $C_2=\langle g_2(x) \rangle$ with parameters $[n_2,k_2,d_2]$, then 
we observe that $C$ has parameters $[n_1+n_2,k_1+k_2,\min\{d_1,d_2\}]$. We formally prove this in the next theorem. 
\begin{Theorem}\label{prop}
 Let $n_1,n_2\in \mathbb{Z}^{+}, a_1,a_2\in \Fq^*$ be such that $\gcd(x^{n_1}-a_1,x^{n_2}-a_2)=1$. 
Let $x^{n_1}-a_1=g_1(x)h_1(x)$ and $x^{n_2}-a_2=g_2(x)h_2(x)$ with $k_1=\deg(h_1(x)), k_2=\deg(h_2(x))$. Let $C_1=\langle g_1(x)\rangle$ and $C_2=\langle g_2(x)\rangle$ with parameters $[n_1,k_1,d_1]$ and $[n_2,k_2,d_2]$, respectively. Then the MT code $C$ with generator $\langle g_1(x), g_2(x)\rangle$ has parameters $[n,k,d]$, where $n=n_1+n_2$, $k=k_1+k_2$, and $d=\min\{d_1,d_2\}$.
\end{Theorem}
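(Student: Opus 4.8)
The plan is to analyze the MT code $C = \langle (g_1(x), g_2(x)) \rangle$ directly as an $\Fq[x]$-submodule of $V = \Fq[x]/\langle x^{n_1}-a_1\rangle \times \Fq[x]/\langle x^{n_2}-a_2\rangle$, and to establish each of the three parameters $n$, $k$, $d$ separately. The length is immediate: $n = n_1 + n_2$ by the definition of the ambient module $V$. For the dimension, I would first show that $C$ is spanned over $\Fq$ by the $k_1 + k_2$ vectors $\{(x^j g_1(x), x^j g_2(x)) : 0 \le j < \max(k_1,k_2)\}$ — or more carefully, by a set obtained from the generator by multiplying by successive powers of $x$ and then reducing — and then argue these are linearly independent. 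The cleanest route is via the projection maps $\rho_i : V \to \Fq[x]/\langle x^{n_i}-a_i\rangle$: their restrictions to $C$ have images $C_1$ and $C_2$ respectively (since $g_i(x)$ is the canonical generator of $C_i$), so $\dim C \le \dim C_1 + \dim C_2 = k_1 + k_2$. The reverse inequality is where the coprimality hypothesis does the work: by CRT applied to $\Fq[x]/\langle (x^{n_1}-a_1)(x^{n_2}-a_2)\rangle$ when $\gcd(x^{n_1}-a_1, x^{n_2}-a_2)=1$, the module $V$ is cyclic and multiplication by $x$ acts with the single annihilator $(x^{n_1}-a_1)(x^{n_2}-a_2)$; the annihilator of the generator $(g_1,g_2)$ is then $\mathrm{lcm}(h_1'(x), h_2'(x))$ where $h_i'$ is the annihilator of $g_i$ in the $i$-th component, and coprimality forces this lcm to have degree exactly $k_1 + k_2$, whence $\dim C = k_1 + k_2$.

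For the minimum distance, the inequality $d \le \min\{d_1, d_2\}$ is easy: take a minimum-weight codeword $v_1 \in C_1$; then $v_1 = f(x) g_1(x)$ for some $f$, and since $\gcd(f \bmod h_1', \text{anything})$ poses no obstruction, I would instead use CRT to find $f(x) \in \Fq[x]$ with $f \equiv (\text{the multiplier of } v_1) \pmod{h_1'}$ and $f \equiv 0 \pmod{h_2'}$, so that $(f g_1, f g_2) = (v_1, 0) \in C$ has weight $\mathrm{wt}(v_1) = d_1$; symmetrically we get a codeword of weight $d_2$, so $d \le \min\{d_1,d_2\}$. For the reverse inequality $d \ge \min\{d_1,d_2\}$, take any nonzero codeword $(u_1, u_2) \in C$ with $u_i = f(x)g_i(x) \bmod (x^{n_i}-a_i)$ for a common $f$. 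At least one of $u_1, u_2$ is nonzero; if $u_1 \ne 0$ then $u_1 \in C_1$ is a nonzero codeword so $\mathrm{wt}(u_1) \ge d_1 \ge \min\{d_1,d_2\}$, and since $\mathrm{wt}(u_1,u_2) \ge \mathrm{wt}(u_1)$ we are done, and symmetrically if $u_2 \ne 0$.

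The main obstacle, and the only step requiring genuine care, is the dimension count — specifically showing $\dim C \ge k_1 + k_2$, i.e. that the coprimality hypothesis genuinely prevents the two coordinate pieces from "interfering." I expect the cleanest formulation is the CRT one: because $\gcd(x^{n_1}-a_1, x^{n_2}-a_2)=1$, there exist $A(x), B(x)$ with $A(x)(x^{n_1}-a_1) + B(x)(x^{n_2}-a_2) = 1$, and this lets one lift any target $(v_1, v_2) \in C_1 \times C_2$ to an element of $C$, proving $C_1 \times C_2 \subseteq C$ (hence equality of dimensions). I would present this CRT lifting as the core lemma, then read off all three parameters. Everything else is bookkeeping with the canonical-generator facts recalled in Lemma~1 and the standard basis $\{g(x), x g(x), \dots, x^{k-1}g(x)\}$ for a constacyclic code.
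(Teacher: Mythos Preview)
Your proposal is correct, and it takes a somewhat different route from the paper's own proof. The paper argues the dimension by exhibiting an explicit basis $\{x^i\cdot(g_1,g_2):0\le i<k_1+k_2\}$: linear independence comes from observing that $f\cdot(g_1,g_2)=0$ forces $h_1\mid f$ and $h_2\mid f$, hence $h_1h_2\mid f$ (coprimality), so $\deg f\ge k_1+k_2$; spanning is obtained by dividing any $f$ by $h_1h_2$ and noting the quotient contributes nothing. Your annihilator/CRT formulation is the module-theoretic rephrasing of exactly this: the annihilator of $(g_1,g_2)$ is $\langle h_1h_2\rangle$, hence $C\cong\Fq[x]/\langle h_1h_2\rangle$. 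Where you genuinely diverge is in pushing CRT further to show $C=C_1\times C_2$ outright, which lets you read off both the dimension and the minimum distance at once; the paper instead handles $d\le\min\{d_1,d_2\}$ by invoking its Lemma on alternate generators (since $\gcd(h_2,h_1)=1$ gives $C_1=\langle h_2g_1\rangle$, so a minimum-weight word of $C_1$ can be written as $t'h_2g_1$, and then $t'h_2\cdot(g_1,g_2)=(u,0)$). Your CRT lifting and the paper's alternate-generator trick are two faces of the same B\'ezout identity for $h_1,h_2$. One small point in your favor: you spell out the easy inequality $d\ge\min\{d_1,d_2\}$, which the paper leaves tacit. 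One small point to tighten: the sentence about projections giving $\dim C\le k_1+k_2$ really rests on the containment $C\subseteq C_1\times C_2$, not on surjectivity of the projections per se; you have the right fact, but the justification as written is slightly off-key.
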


\begin{proof}

The assertions on length of the MT code $C$ is clear. To prove the assertion on dimension, we show that the set $S=\{x^i\cdot (g_1(x),g_2(x)): 0\leq i\leq k-1\}$, where $k=k_1+k_2$, is a basis for $C$. Suppose that $f(x)\cdot (g_1(x),g_2(x))=(f(x)g_1(x)\mod x^{n_1}-a_1$, $f(x)g_2(x)\mod x^{n_2}-a_2)=0.$ Then
 $f(x)g_1(x) \equiv 0 \mod x^{n_1}-a_1 $ and $f(x)g_2(x) \equiv 0 \mod x^{n_2}-a_2 $. Therefore, $h_1(x)|f(x)$ and $h_2(x)|f(x)$. Since $h_1(x)$ and $h_2(x)$ are relatively prime (because $\gcd(x^{n_1}-a_1,x^{n_2}-a_2)=1$), we have $h_1(x)h_2(x)|f(x)$, and hence $\deg(f(x))\geq k$. This implies the linear independence of vectors in $S$. 
 
 It remains to show that $S$ is a set of generators for $C$. For this it suffices to show that for every $f(x)\in \Fq[x]$ with $\deg(f(x))\geq k$ there exists $r(x)\in \Fq[x]$ with $\deg(r(x))< k$ such that $f(x)\cdot (g_1(x),g_2(x))=r(x)\cdot (g_1(x),g_2(x)).$ 
So let $f(x)\in \Fq[x]$ be an arbitrary polynomial with $\deg(f(x))\geq k$. We can write $f(x)$ as $f(x)=h_1(x)h_2(x)q(x)+r(x)$, for some $q(x),r(x)\in \Fq[x]$ where $\deg(r(x))< \deg(h_1(x)h_2(x))=k$. It follows that
\begin{align*}
f(x)\cdot (g_1(x),g_2(x))&=(f(x)g_1(x)\mod x^{n_1}-a_1, f(x)g_2(x)\mod x^{n_2}-a_2)\\
                          &=(r(x)g_1(x)\mod x^{n_1}-a_1, r(x)g_2(x)\mod x^{n_2}-a_2)\\
                          &=r(x)\cdot (g_1(x),g_2(x)).
\end{align*}
Finally, to show that $d=\min\{d_1,d_2\}$, let $u(x)=t(x)g_1(x)\in C_1$ be a codeword of minimum weight in $C_1$. Since $\gcd(h_1(x),h_2(x))=1$, we have $C_1=\langle g_1(x)\rangle = \langle h_2(x)g_1(x)\rangle$. Hence,  $u(x)=t^{\prime}(x)h_2(x)g_1(x)$ for some $t^{\prime}(x) \in \Fq[x]$ (degree of which can be taken to be $<k_1$). Letting $f(x)=t^{\prime}(x)h_2(x)$ and considering the codeword $f(x)\cdot (g_1(x),g_2(x))$, we find a codeword $(u(x),0)\in C$  of weight $d_1$ in $C$. We can similarly show that there exists a codeword of weight $d_2$ in $C$ as well.    
\end{proof}

 Given this result, we cannot hope to find codes with good parameters under the conditions of the above proposition. We need to look for alternative ways to construct MT codes with potentially high minimum distances. Before coming to more promising constructions, let us first disqualify another case where no MT codes with high minimum distances can be expected.

\begin{Theorem}\label{thm1}
Let $a_1,a_2\in \Fq^*$, $n_1\not = n_2\in \mathbb{Z}^{+}$ be such that 
$\gcd(x^{n_1}-a_1,x^{n_2}-a_2)=x^m-a=g(x)h(x)$,  for some $m \in \mathbb{Z}^{+}, a\in \Fq$ and let $x^{n_1}-a_1=g(x)h(x)h_1(x)$ and $x^{n_2}-a_2=g(x)h(x)h_2(x)$ with $k=\deg(h(x)), k_1=\deg(h_1(x))$, $k_2=\deg(h_2(x))$. Let $C$ be the MT code of length $n_1+n_2$ with generator $\langle g(x)p_1(x), g(x)p_2(x) \rangle$, where $\gcd(h(x)h_1(x),p_1(x))=1$ and $\gcd(h(x)h_2(x),p_2(x))=1$. Then $C$ has parameters $[n_1+n_2,k+k_1+k_2,d]$ with $d\leq 2$.
\end{Theorem}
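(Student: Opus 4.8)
The plan is to show that $C$ has dimension $k+k_1+k_2$ and contains a nonzero codeword of weight at most $2$. For the dimension, I would argue exactly as in the proof of Theorem~\ref{prop}: the natural candidate basis is $S=\{x^i\cdot(g(x)p_1(x),g(x)p_2(x)) : 0\le i\le k+k_1+k_2-1\}$. If $f(x)\cdot(g(x)p_1(x),g(x)p_2(x))=0$, then $x^{n_1}-a_1\mid f(x)g(x)p_1(x)$ and $x^{n_2}-a_2\mid f(x)g(x)p_2(x)$; using $x^{n_1}-a_1=g(x)h(x)h_1(x)$ together with $\gcd(h(x)h_1(x),p_1(x))=1$ gives $h(x)h_1(x)\mid f(x)$, and similarly $h(x)h_2(x)\mid f(x)$. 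The key point is that $\gcd(h(x)h_1(x),\,h(x)h_2(x))=h(x)$: indeed $h_1(x)$ and $h_2(x)$ share no common root, since a common root would be a root of $g(x)h(x)h_1(x)=x^{n_1}-a_1$ and of $g(x)h(x)h_2(x)=x^{n_2}-a_2$, hence a root of their gcd $x^m-a=g(x)h(x)$, contradicting that $h_1(x),h_2(x)$ are the complementary factors (and all these binomials are squarefree because $\gcd(m,q)=1$ type hypotheses propagate from Theorem~\ref{gcd}). Therefore $\mathrm{lcm}(h(x)h_1(x),h(x)h_2(x))=h(x)h_1(x)h_2(x)$ divides $f(x)$, so $\deg f\ge k+k_1+k_2$, giving linear independence. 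The spanning argument is the division-with-remainder argument of Theorem~\ref{prop} verbatim, dividing by $h(x)h_1(x)h_2(x)$.

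For the distance bound, the idea is that the common factor $g(x)h(x)=x^m-a$ forces a short relation tying the two coordinates together. Concretely, consider $x^m$ acting on the first coordinate: $x^m\cdot c_1(x) \equiv a_1 c_1(x)\pmod{x^{n_1}-a_1}$ only after $n_1$ shifts, which is not immediately short, so instead I would exploit that $g(x)h(x)\mid x^{n_1}-a_1$ means $g(x)h(x)(\text{something})=x^{n_1}-a_1$, i.e. modulo $x^{n_1}-a_1$ we have $g(x)h(x)h_1(x)\equiv 0$. Multiplying the generator by $h_1(x)h_2(x)$ kills neither coordinate in general; the cleaner route is: pick $f(x)$ so that $f(x)g(x)p_1(x)\bmod(x^{n_1}-a_1)$ and $f(x)g(x)p_2(x)\bmod(x^{n_2}-a_2)$ are each either $0$ or a monomial. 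Since $\gcd(p_1,h h_1)=1$, the map $r\mapsto r\,g p_1$ from $\Fq[x]/\langle h h_1\rangle$ into $C_1':=\langle g p_1\rangle$ is a bijection onto a code of length $n_1$, dimension $k+k_1$, equal to $\langle g\rangle$-type; in particular $x^m\cdot(\text{generator of }\langle g\rangle)$ lies in $\langle g\rangle$ and since $g(x)h(x)=x^m-a$ we get $x^m g(x)\equiv a\,g(x)$, so $g(x)$ itself behaves constacyclically with modulus $x^m-a$ inside the length-$n_1$ ring. This produces, after choosing the cofactor of $g$ appropriately, a codeword whose first coordinate is $e_i - a^{-1}e_j$ (two nonzero entries) and whose second coordinate is $0$ — because we can route the whole weight into the first block by multiplying by a suitable multiple of $h_2(x)$ — or symmetrically a weight-$2$ word. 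Hence $d\le 2$.

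The main obstacle I anticipate is making the weight-$\le 2$ codeword fully explicit and verifying it is nonzero: one must choose the polynomial multiplier $f(x)$ so that simultaneously the second coordinate vanishes (forcing $h(x)h_2(x)\mid f(x)$) while the first coordinate reduces to exactly $g(x)p_1(x)\cdot(\text{low-degree factor})$ that, modulo $x^{n_1}-a_1$, collapses via the relation $g(x)h(x)\mid x^{n_1}-a_1$ to a binomial supported on two coordinates. Equivalently, one shows the first-coordinate code $\langle g(x)p_1(x)\rangle$ of length $n_1$ contains the image of $\langle g(x)p_1(x)\rangle$ of length $m$ — but that length-$m$ code, being all of $\langle g(x)\rangle$-cosets, contains $x^m - a$ times a unit, i.e. contains a codeword of weight $2$ whenever $k\ge 1$, and pulling this back along the embedding keeps weight $2$. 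I would need to handle separately the degenerate case $k=0$ (i.e. $g(x)=1$), where instead the relation $h(x)h_1(x) = x^{n_1}-a_1$ up to the factor... — but note $k=0$ forces $x^m-a$ to have degree $0$, contradicting $m\in\mathbb{Z}^+$, so $k\ge 1$ always and this case does not arise. The condition $n_1\ne n_2$ is what guarantees we are genuinely outside Theorem~\ref{prop} and that $h_1(x)h_2(x)$ is a nontrivial (degree $\ge 1$) polynomial, which the argument uses implicitly; its precise role is to ensure the constructed weight-$2$ word is not forced to be zero.
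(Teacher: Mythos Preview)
Your dimension argument is correct and matches the paper's: the same basis $S$, the same annihilator computation, the same division-with-remainder spanning argument, together with the observation that $\gcd(h_1,h_2)=1$ because a common root would lie in $\gcd(x^{n_1}-a_1,x^{n_2}-a_2)=g(x)h(x)$.

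Your distance argument, however, does not close. You cycle through several heuristics without producing the codeword, and some of the intermediate claims are false:
\begin{itemize}
\item ``multiplying by a suitable multiple of $h_2(x)$'' does \emph{not} kill the second coordinate; you need a multiple of $h(x)h_2(x)$, as you yourself note one paragraph later.
\item The relation $x^m g(x)\equiv a\,g(x)$ holds modulo $x^m-a$, not modulo $x^{n_1}-a_1$, so that line of reasoning does not transfer to the length-$n_1$ block.
\item Your degenerate-case analysis is mistaken: $k=\deg h(x)$, so $k=0$ means $h(x)$ is a nonzero constant and $g(x)=x^m-a$; it does \emph{not} force $m=0$.
\item You never isolate the role of $n_1\ne n_2$, which is exactly what guarantees that the binomial you are aiming for has degree strictly less than $\max(n_1,n_2)$ and hence does not reduce to zero in that block.
\end{itemize}

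The paper's argument is a two-line computation that your sketch is circling. Assume without loss of generality $n_1>n_2$. One first observes (via Lemma~2.1 and the dimension count already established) that $C$ is equally well generated by $(g(x),\,g(x)p_2(x))$, since $\gcd(p_1(x),h(x)h_1(x))=1$. Then
\[
h(x)h_2(x)\cdot\bigl(g(x),\,g(x)p_2(x)\bigr)=\bigl(x^{n_2}-a_2,\;0\bigr),
\]
because $g(x)h(x)h_2(x)=x^{n_2}-a_2$ annihilates the second coordinate, while in the first coordinate $x^{n_2}-a_2$ has degree $n_2<n_1$ and so survives reduction as a vector of weight~$2$. If you prefer to avoid the change of generator, an equivalent direct route is: the set of first coordinates of codewords in $C$ with zero second coordinate is the constacyclic code $\langle g(x)h(x)h_2(x)p_1(x)\rangle$ of length $n_1$, which by Lemma~2.1 (using $\gcd(h_2 p_1,\,h h_1)=1$) equals $\langle g(x)h(x)\rangle=\langle x^m-a\rangle$; since $m\le n_2<n_1$, the element $x^m-a$ itself is a weight-$2$ codeword. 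Either way, the hypothesis $n_1\ne n_2$ is precisely what makes the binomial a nonzero vector in the longer block.
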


\begin{proof}
An argument identical to the one used in Proposition \ref{prop} shows that $S=\{x^i\cdot (g(x)p_1(x), g(x)p_2(x)): 0\leq i\leq k+k_1+k_2-1\}$ is a basis for $C$, and hence proves the assertion about the dimension.

In order to see that minimum distance is at most 2,  we first assume, w.l.o.g., that $n_1 > n_2$ and observe that $C$ is also generated by $\langle g(x), g(x)p_2(x)\rangle$. Then we  note that $h(x)h_2(x)\cdot(g(x), g(x)p_2(x))=(x^{n_2}-a_2, 0)$, which is a codeword of weight $2$.
\end{proof}

Note that the theorem above points out a significant difference between QT and MT codes, since many good codes from the class of QT codes have generators of the form $\langle g(x), g(x)p(x) \rangle$, where $\gcd(h(x),p(x))=1$ \cite{qtmain}. Hence, we need to consider MT codes with generators of different form to look for codes with potentially high minimum distances. The next theorem presents such a subclass of MT codes. Afterwards we will introduce a class of promising subcodes of MT codes. 

\begin{Theorem}\label{mainthm}
Let $a_1,a_2\in \Fq^*$, $n_1,n_2\in \mathbb{Z}^{+}$ be such that $\gcd(x^{n_1}-a_1,x^{n_2}-a_2)=x^m-a=g(x)h(x)$,  for some $m \in \mathbb{Z}^{+}, a\in \Fq$ and let $x^{n_1}-a_1=g(x)h(x)h_1(x)$ and $x^{n_2}-a_2=g(x)h(x)h_2(x)$ with $k=\deg(h(x)), k_1=\deg(h_1(x))$, $k_2=\deg(h_2(x))$.
Let $C_1=\langle h(x)h_1(x)\rangle$ be a constacyclic code with shift constant $a_1$ and parameters $[n_1,m-k,d_1]$ and let $C_2=\langle h(x)h_2(x)\rangle$ be a constacyclic code with shift constant $a_2$ and parameters $[n_2,m-k,d_2]$. Then an MT code $C$ with a generator of the form $\langle p_1(x)h(x)h_1(x),p_2(x)h(x)h_2(x)\rangle$, where $\gcd(p_i(x),g(x))=1$ for $i=1,2$, has parameters $[n_1+n_2,m-k,d]$, where $d\geq d_1+d_2$.
\end{Theorem}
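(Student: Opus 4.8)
The plan is to mimic the structure of the proof of Theorem~\ref{thm:lowerboundQT} (the QT lower bound) and the dimension argument of Proposition~\ref{prop}, adapting both to the ``stacked constacyclic codes'' situation. First I would establish the dimension. Write $k' = m-k = \deg(h_1(x)) + \dots$, wait---more precisely $m-k = \deg\bigl(x^m-a\bigr) - \deg\bigl(h(x)\bigr) = \deg(g(x))$, so both $C_1$ and $C_2$ have dimension $\deg(g(x)) = m-k$. The key point is that $\langle p_i(x)h(x)h_i(x)\rangle$ generates the \emph{same} constacyclic code as $\langle h(x)h_i(x)\rangle$ inside $\Fq[x]/\langle x^{n_i}-a_i\rangle$: by Lemma~1, since $\gcd(p_i(x), g(x))=1$ where $g(x)$ is the check polynomial of the constacyclic code $\langle h(x)h_i(x)\rangle$ of length $n_i$ (because $x^{n_i}-a_i = \bigl(h(x)h_i(x)\bigr)\cdot g(x)$), multiplying the generator by $p_i(x)$ does not change the code. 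So the projection of $C$ onto each coordinate is $C_i$, which already shows $\dim C \le m-k$. For the reverse inequality I would show $S = \{ x^j \cdot (p_1 h h_1, p_2 h h_2) : 0 \le j \le m-k-1\}$ is linearly independent: if $f(x)\cdot(p_1 h h_1, p_2 h h_2) = 0$ then $f(x) p_1(x) h(x) h_1(x) \equiv 0 \pmod{x^{n_1}-a_1}$, forcing $g(x) \mid f(x) p_1(x)$, hence $g(x) \mid f(x)$ since $\gcd(g,p_1)=1$; thus $\deg f \ge \deg g = m-k$, giving independence. The generating argument is identical to Proposition~\ref{prop} with $h_1 h_2$ replaced by $g$.

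Next, the minimum distance bound $d \ge d_1 + d_2$. Let $C(x) = f(x)\cdot(p_1 h h_1, p_2 h h_2)$ be a nonzero codeword. Its first component is $f(x) p_1(x) h(x) h_1(x) \bmod (x^{n_1}-a_1)$, which lies in $C_1$, and its second component lies in $C_2$. I would argue that \emph{both} components are nonzero: if the first component were zero then by the independence argument above $g(x) \mid f(x)$, but then the second component $f(x) p_2(x) h(x) h_2(x) \bmod (x^{n_2}-a_2)$ would also be zero (since $g \mid f$ makes $f p_2 h h_2$ divisible by $g \cdot h h_2 = x^{n_2}-a_2$), so $C(x)$ would be the zero codeword, a contradiction. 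Symmetrically the second component is nonzero whenever the first is. Hence any nonzero codeword of $C$ has nonzero projection to $C_1$ (weight $\ge d_1$) and nonzero projection to $C_2$ (weight $\ge d_2$), and since these occupy disjoint coordinate blocks, $\mathrm{wt}(C(x)) \ge d_1 + d_2$.

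The main obstacle I anticipate is the claim that the first component vanishing forces the second to vanish, i.e.\ that one cannot have a codeword supported entirely on the second block (or entirely on the first). This is exactly where the hypothesis that $\gcd(x^{n_1}-a_1, x^{n_2}-a_2) = x^m - a$ is \emph{nontrivial} is used, together with the specific choice of generators containing the common factor $h(x)$: the shared factor $g(x)$ of both $x^{n_i}-a_i$ is what links the two vanishing conditions. If instead one could multiply $f$ by something killing only one component, the bound would fail (compare Theorem~\ref{thm1}, where the generator $\langle g, g p_2\rangle$ does admit the weight-$2$ codeword $(x^{n_2}-a_2, 0)$). So I would take care to phrase the argument purely in terms of $g(x) \mid f(x) p_i(x) \iff g(x)\mid f(x)$ and the factorizations $x^{n_i}-a_i = g(x)h(x)h_i(x)$, making the symmetry between the two blocks explicit. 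The dimension computation $m - k = \deg g$ and the application of Lemma~1 to identify $C_i = \langle p_i h h_i\rangle$ are routine and I would state them briefly.
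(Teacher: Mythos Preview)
Your proposal is correct and follows essentially the same route as the paper's proof: the paper also argues that the first component of a codeword $f(x)\cdot(p_1 h h_1, p_2 h h_2)$ vanishes iff $g(x)\mid f(x)$ iff the second component vanishes, so no nonzero codeword is supported on a single block and $d\ge d_1+d_2$. Your treatment of the dimension (via Lemma~1 and the independence/spanning of $\{x^j\cdot(p_1hh_1,p_2hh_2):0\le j\le m-k-1\}$) simply spells out what the paper dismisses with ``as in the previous theorems.''
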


\begin{proof}
The assertion on dimension can be proven as in the previous theorems. To see why $d\geq d_1+d_2$, let $c:=f(x)\cdot(p_1(x)h(x)h_1(x),p_2(x)h(x)h_2(x))$, for some $f(x)\in\Fq[x]$, be an arbitrary codeword in $C$. We observe that the first component of $c$ is 0 if and only if $g(x)|f(x)$ if and only if the second component of $c$ is 0. Therefore, in $C$ there are no codewords with only 1 non-zero component, and hence $d\geq d_1+d_2$.
\end{proof}

We now give an example of an MT code with parameters of a best-known code, obtained using the theorem above. It shows that the actual minimum distance of an MT code can be significantly larger than the theoretical lower bound. 

\begin{Example}
Let $q=3, n_1=20, n_2=40, a_1=2$ and $a_2=1$ in Theorem \ref{mainthm}. We have 
$\gcd(x^{n_1}-a_1,x^{n_2}-a_2 )= x^{20}-2=g(x)h(x)$ where $g(x)=x^6 + x^5 + x^4 + 2x + 2$. The constacyclic codes $C_1$ and $C_2$ generated by $\langle h(x)h_1(x) \rangle$ and $\langle h(x)h_2(x) \rangle$ as described in Theorem \ref{mainthm}  have parameters $[20, 6, 9]$ and $[40, 6, 18]$ respectively. Then the MT code $C$ with a generator in the form given by the theorem has parameters $[60,6,d]$ with $d\geq 27$. We found that for  $p_1(x)=x^3 + x^2 + 2x$ and
$p_2(x)=x^5 + x^4 + x^3 + x^2 + 2x + 1$, the minimum weight is actually 36. Hence we obtain a ternary $[60,6,36]$ code which is a best known code for its parameters \cite{table}.  
\end{Example}

Another way to obtain codes with larger minimum distances is to consider subcodes of MT codes.

\begin{Theorem}\label{mainthm2}
Let $C_1=\langle g_1(x)\rangle$ be a constacyclic code with shift constant $a_1$ and parameters $[n_1,k,d_1]$ and let $C_2=\langle g_2(x)\rangle$ be a constacyclic code with shift constant $a_2$ and parameters $[n_2,k,d_2]$. Let $x^{n_1}-a_1=g_1(x)h_1(x)$ and $x^{n_2}-a_2=g_2(x)h_2(x)$ with $\deg(h_1(x))=\deg(h_2(x))=k$. Then an MT code $C$ with a generator of the form $\langle p_1(x)g_1(x),p_2(x)g_2(x)\rangle$, where $\gcd(p_i(x),h_i(x))=1$ for $i=1,2$, has a subcode $C'$ with parameters $[n,k,d]$, where $n=n_1+n_2$ and $d\geq d_1+d_2$. 
\end{Theorem}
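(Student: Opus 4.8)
The plan is to exhibit the subcode $C'$ explicitly and then verify its three parameters one at a time, in the style of the earlier proofs of this section. I would take
\[ C' = \{\, f(x)\cdot(p_1(x)g_1(x),\,p_2(x)g_2(x)) \;:\; f(x)\in\Fq[x],\ \deg f(x) < k \,\}, \]
an $\Fq$-linear subspace of the MT code $C$ of block length $n_1+n_2$, so that the claim about length is immediate.

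For the dimension I would show that the $\Fq$-linear map $f(x)\mapsto f(x)\cdot(p_1(x)g_1(x),p_2(x)g_2(x))$ is injective on the space of polynomials of degree less than $k$. If the image is $0$ then $x^{n_i}-a_i\mid f(x)p_i(x)g_i(x)$ for $i=1,2$, hence $h_i(x)\mid f(x)p_i(x)$; since $\gcd(p_i(x),h_i(x))=1$ this forces $h_i(x)\mid f(x)$, and because $\deg h_i(x)=k>\deg f(x)$ we get $f(x)=0$. Hence $\dim_{\Fq}C'=k$, with basis $\{x^i\cdot(p_1(x)g_1(x),p_2(x)g_2(x)):0\le i\le k-1\}$.

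For the distance I would take an arbitrary nonzero codeword $c=f(x)\cdot(p_1(x)g_1(x),p_2(x)g_2(x))$ with $f(x)\ne 0$ and $\deg f(x)<k$, and show that both coordinate blocks of $c$ are nonzero. The first block is $f(x)p_1(x)g_1(x)\bmod(x^{n_1}-a_1)$, a multiple of $g_1(x)$, hence an element of $C_1$; by the divisibility argument above it cannot vanish (that would again force $f(x)=0$), so its weight is at least $d_1$. By the same reasoning the second block is a nonzero codeword of $C_2$, contributing weight at least $d_2$. Since the two blocks occupy disjoint coordinate positions, $\mathrm{wt}(c)$ is the sum of the two block weights, so $\mathrm{wt}(c)\ge d_1+d_2$, and therefore $d\ge d_1+d_2$.

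I do not anticipate a real obstacle; the argument is short. The one thing worth stressing is the double role played by the hypotheses $\gcd(p_i(x),h_i(x))=1$: they are what make the generator map injective on degree-$<k$ polynomials (so the dimension is exactly $k$ and not smaller), and they are also what prevents a nonzero codeword of $C'$ from having a vanishing coordinate block --- precisely the step that upgrades the easy bound $d\ge\min\{d_1,d_2\}$ to the additive bound $d\ge d_1+d_2$. It is also worth recalling the earlier characterization of the generators of a constacyclic code, which gives $C_i=\langle p_i(x)g_i(x)\rangle$ and thereby legitimizes treating each block as a codeword of $C_i$ in the first place.
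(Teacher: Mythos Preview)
Your proposal is correct and follows essentially the same route as the paper: define $C'$ as the $\Fq$-span of $\{x^i\cdot(p_1(x)g_1(x),p_2(x)g_2(x)):0\le i\le k-1\}$, use $\gcd(p_i,h_i)=1$ together with $\deg f<k=\deg h_i$ to conclude that neither coordinate block of a nonzero codeword can vanish, and hence obtain both the dimension and the bound $d\ge d_1+d_2$. If anything, your write-up is a bit more explicit about the dimension count than the paper's own proof, which simply asserts $\dim C'=k$.
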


\begin{proof}
   We consider the subcode $C'$ of $C$ of dimension $k$, generated by 
   \begin{center}
   $\{x^i\cdot (p_1(x)g_1(x), p_2(x)g_2(x)): 0\leq i\leq k-1\}$,
   \end{center}
 where $x^i\cdot (p_1(x)g_1(x), p_2(x)g_2(x))=(x^i p_1(x)g_1(x), x^i p_2(x)g_2(x))$. We have that every codeword of $C'$ is of the form \[a(x)\cdot (p_1(x)g_1(x),p_2(x)g_2(x))=(a(x)p_1(x)g_1(x),a(x)p_2(x)g_2(x)),\] where $a(x)$ is a polynomial of degree $<k$. To prove $d\geq d_1+d_2$, it suffices to show that for a non-zero polynomial $a(x)$ both components of the codeword $(a(x)p_1(x)g_1(x),a(x)p_2(x)g_2(x))$ are non-zero. But this is true since $\deg(a(x))<k=\deg(h_1)$ and $\gcd(p_1(x),h_1(x))=1$, and therefore $x^{n_1}-a_1$ does not divide $a(x)p_1(x)g_1(x)$. Similarly, $x^{n_2}-a_2$ does not divide $a(x)p_2(x)g_2(x)$. 
\end{proof}

Note that a straightforward generalization of the argument above proves the following

\begin{Theorem}
Let $C_i=\langle g_i(x)\rangle$ be a constacyclic code with shift constant $a_i$ and parameters $[n_i,k,d_i]$ for $i=1,\dots, \ell$. Let $x^{n_i}-a_i=g_i(x)h_i(x)$ with $\deg(h_i(x))=k$. Then an MT code $C$ with a generator of the form 
\[\langle p_1(x)g_1(x),p_2(x)g_2(x),\dots,p_\ell(x)g_\ell(x)\rangle,\] 
where $\gcd(p_i(x),h_i(x))=1$ for $i=1,\dots, \ell$, has a subcode with parameters $[n,k,d]$, where 
$\displaystyle{n=\sum_{i=1}^{i=\ell} n_i}$ and $\displaystyle{d\geq \sum_{i=1}^{i=\ell}d_i}$. 
\end{Theorem}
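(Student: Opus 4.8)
The plan is to follow the same template used in the proof of Theorem~\ref{mainthm2}, generalizing it from two coordinates to $\ell$ coordinates. First I would fix the candidate subcode
\[
C' = \Bigl\langle\, x^i\cdot\bigl(p_1(x)g_1(x),\,p_2(x)g_2(x),\,\dots,\,p_\ell(x)g_\ell(x)\bigr) : 0\le i\le k-1 \,\Bigr\rangle,
\]
so that every element of $C'$ has the form $a(x)\cdot\bigl(p_1(x)g_1(x),\dots,p_\ell(x)g_\ell(x)\bigr)=\bigl(a(x)p_1(x)g_1(x) \bmod x^{n_1}-a_1,\dots,a(x)p_\ell(x)g_\ell(x)\bmod x^{n_\ell}-a_\ell\bigr)$ with $\deg(a(x))<k$. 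The dimension claim $\dim C' = k$ then reduces to linear independence of the $k$ listed generators: if $a(x)\cdot(p_1g_1,\dots,p_\ell g_\ell)=0$ with $\deg(a)<k$, then in particular $x^{n_1}-a_1 \mid a(x)p_1(x)g_1(x)$, hence $h_1(x)\mid a(x)p_1(x)$, and since $\gcd(p_1(x),h_1(x))=1$ we get $h_1(x)\mid a(x)$; as $\deg(h_1)=k>\deg(a)$, this forces $a(x)=0$.

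Next I would establish the distance bound $d(C')\ge \sum_{i=1}^{\ell} d_i$. The key observation is that for a nonzero $a(x)$ with $\deg(a)<k$, \emph{every} coordinate of the codeword is nonzero: for each $i$, if the $i$th coordinate vanished we would have $x^{n_i}-a_i\mid a(x)p_i(x)g_i(x)$, hence $h_i(x)\mid a(x)p_i(x)$, hence $h_i(x)\mid a(x)$ by $\gcd(p_i,h_i)=1$, contradicting $\deg(a)<k=\deg(h_i)$. Now the $i$th coordinate, viewed in $\Fq[x]/\langle x^{n_i}-a_i\rangle$, is a nonzero element of $C_i=\langle g_i(x)\rangle$ (it is a multiple of $g_i(x)$), so its Hamming weight is at least $d_i$. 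Summing over $i=1,\dots,\ell$ gives $\mathrm{wt}\bigl(a(x)\cdot(p_1g_1,\dots,p_\ell g_\ell)\bigr)\ge \sum_{i=1}^\ell d_i$, and since this holds for every nonzero codeword of $C'$, we conclude $d(C')\ge \sum_{i=1}^\ell d_i$. The length assertion $n=\sum_{i=1}^\ell n_i$ is immediate from the ambient multi-twisted module $V=\prod_{i=1}^\ell \Fq[x]/\langle x^{n_i}-a_i\rangle$.

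There is essentially no hard part here: the argument is a verbatim coordinate-wise repetition of the $\ell=2$ case, and the only thing to be careful about is that the coprimality hypothesis $\gcd(p_i(x),h_i(x))=1$ is invoked separately in each coordinate, so nothing about the interaction between different coordinates (such as a gcd condition among the $x^{n_i}-a_i$) is needed. If I wanted to be thorough I would also note explicitly that the codeword $a(x)\cdot(p_1g_1,\dots,p_\ell g_\ell)$ genuinely lies in the MT code $C$ and hence in its subcode $C'$ — but that is built into the definition of $C'$ as the span of the shifts of the generator, so no separate check is required. The only mild subtlety worth a sentence is why $C'$ is a subcode (i.e.\ an $\Fq$-subspace, not necessarily an $\Fq[x]$-submodule) rather than all of $C$: multiplying a generator of $C'$ by $x$ can increase the degree of $a(x)$ past $k-1$, so $C'$ need not be $x$-invariant, which is exactly why it is only claimed to be a linear subcode.
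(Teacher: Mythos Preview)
Your proposal is correct and follows exactly the approach the paper intends: the paper does not give a separate proof but simply remarks that a straightforward generalization of the argument for Theorem~\ref{mainthm2} yields the result, and your write-up is precisely that generalization carried out coordinate by coordinate. Your extra remarks on why $C'$ is only a linear subcode and why no interaction between the $x^{n_i}-a_i$ is needed are accurate and go slightly beyond what the paper states.
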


Finally, we highlight a special case of Theorem \ref{mainthm2} which may be useful. 

\begin{Corollary}\label{gcd1}
Let $n_1<n_2$,  and $x^{n_2}-a_2=g(x)h(x)$, where $\deg(h(x))=n_1$. Let $p_1(x)$ and $p_2(x)$ be any two polynomials such that $\gcd(p_1(x),x^{n_1}-a_1)=1$ and  $\gcd(p_2(x),h(x))=1$. Let $C$ be the MT subcode generated by $\{x^i\cdot (p_1(x), g(x)p_2(x)): 0\leq i < n_1\}$. Then $C$ has parameters $[n,k,d]$, where $n=n_1+n_2, k=n_1$, and $d\geq d_2+1$, where $d_2$ is the minimum weight of the constacyclic code of length $n_2$, shift constant $a_2$, and generated by  $g(x)p_2(x)$.
\end{Corollary}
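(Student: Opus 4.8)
The plan is to derive Corollary~\ref{gcd1} as a direct specialization of Theorem~\ref{mainthm2}. The key observation is that the MT subcode described in the Corollary is exactly the subcode $C'$ produced by Theorem~\ref{mainthm2} once we make the right identifications. Specifically, I would take the first constacyclic code to be $C_1 = \langle 1 \rangle = \Fq[x]/\langle x^{n_1}-a_1\rangle$, the whole ambient ring, which is a (trivial) constacyclic code of length $n_1$, shift constant $a_1$, dimension $n_1$, and minimum distance $d_1 = 1$; here the canonical generator is $g_1(x) = 1$ and the check polynomial is $h_1(x) = x^{n_1}-a_1$, so $\deg(h_1(x)) = n_1$. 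For the second code I would take $C_2 = \langle g(x) \rangle$ of length $n_2$ and shift constant $a_2$, where $x^{n_2}-a_2 = g(x)h(x)$ with $\deg(h(x)) = n_1$; thus $g_2(x) = g(x)$, $h_2(x) = h(x)$, and $C_2$ has dimension $k = n_1$ and minimum distance $d_2$.

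With these identifications, the hypotheses of Theorem~\ref{mainthm2} become precisely $\gcd(p_1(x), h_1(x)) = \gcd(p_1(x), x^{n_1}-a_1) = 1$ and $\gcd(p_2(x), h_2(x)) = \gcd(p_2(x), h(x)) = 1$, which are exactly the conditions imposed in the Corollary. The generator of the MT code in Theorem~\ref{mainthm2} is $\langle p_1(x)g_1(x), p_2(x)g_2(x)\rangle = \langle p_1(x), g(x)p_2(x)\rangle$, matching the Corollary's generator, and the subcode $C'$ is spanned by $\{x^i \cdot (p_1(x), g(x)p_2(x)) : 0 \le i \le k-1\}$ with $k = n_1$, which is exactly the set $\{x^i \cdot (p_1(x), g(x)p_2(x)) : 0 \le i < n_1\}$ in the Corollary. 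Theorem~\ref{mainthm2} then tells us $C'$ has parameters $[n, k, d]$ with $n = n_1 + n_2$, $k = n_1$, and $d \ge d_1 + d_2 = 1 + d_2 = d_2 + 1$, which is the claimed bound.

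I would then add a brief remark justifying the two facts used about the trivial code $C_1$: that $\Fq[x]/\langle x^{n_1}-a_1\rangle$ is indeed a constacyclic code of minimum distance $1$ (it contains the weight-one word $1$, and no code of positive length has minimum distance $0$), and that its canonical generator is $1$ with check polynomial $x^{n_1}-a_1$ of degree $n_1$. Neither fact is deep, but spelling them out makes the reduction airtight. The only mild subtlety — and the one place I would be careful — is confirming that $d_1 = 1$ is the correct value to plug into the bound $d \ge d_1 + d_2$: since Theorem~\ref{mainthm2} requires $C_1$ to genuinely be a constacyclic code of the stated dimension and distance, I need the full ambient ring to qualify, which it does, with $d(C_1) = 1$. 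There is no real obstacle here; the proof is essentially a dictionary translation.

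\begin{proof}
This is the special case of Theorem~\ref{mainthm2} obtained by taking $C_1$ to be the full ambient space. Concretely, let $C_1 = \Fq[x]/\langle x^{n_1}-a_1\rangle$, viewed as a constacyclic code of length $n_1$ with shift constant $a_1$; its canonical generator is $g_1(x) = 1$, its check polynomial is $h_1(x) = x^{n_1}-a_1$ with $\deg(h_1(x)) = n_1$, and it has parameters $[n_1, n_1, 1]$ (it contains the weight-one word $1$, and minimum distance cannot be $0$). Let $C_2 = \langle g(x)\rangle$ be the constacyclic code of length $n_2$ with shift constant $a_2$, so $g_2(x) = g(x)$, $h_2(x) = h(x)$ with $\deg(h(x)) = n_1 =: k$, and $C_2$ has parameters $[n_2, k, d_2]$.

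With these choices, the hypotheses $\gcd(p_1(x), h_1(x)) = \gcd(p_1(x), x^{n_1}-a_1) = 1$ and $\gcd(p_2(x), h_2(x)) = \gcd(p_2(x), h(x)) = 1$ of Theorem~\ref{mainthm2} are exactly the assumptions made here. The MT code of that theorem has generator $\langle p_1(x)g_1(x), p_2(x)g_2(x)\rangle = \langle p_1(x), g(x)p_2(x)\rangle$, and its subcode $C'$ is spanned by $\{x^i\cdot(p_1(x), g(x)p_2(x)) : 0 \le i \le k-1\}$, which coincides with the code $C$ in the statement since $k = n_1$. Theorem~\ref{mainthm2} gives that $C$ has parameters $[n, k, d]$ with $n = n_1 + n_2$, $k = n_1$, and
\[
d \ge d_1 + d_2 = 1 + d_2 = d_2 + 1,
\]
as claimed.
\end{proof}
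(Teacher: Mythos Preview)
Your proposal is correct and matches the paper's approach: the paper simply states that the Corollary is a special case of Theorem~\ref{mainthm2}, and you have made the specialization explicit by taking $g_1(x)=1$, $h_1(x)=x^{n_1}-a_1$, $d_1=1$. One small point worth adding for completeness: the Corollary defines $d_2$ as the minimum weight of the constacyclic code generated by $g(x)p_2(x)$, whereas you take $d_2$ to be that of $C_2=\langle g(x)\rangle$; these agree because $\gcd(p_2(x),h(x))=1$ forces $\langle g(x)p_2(x)\rangle=\langle g(x)\rangle$ by Lemma~2.1.
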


\section{Examples}

Finally, we present a few examples of subcodes of MT codes with good parameters. All of these codes have parameters of a best-known code, and some of them are optimal. Moreover, some of them cannot be obtained as a constacyclic or QT code. Considering the class of MT codes and their subcodes in a computer search to discover new linear codes is both more promising and more challenging than in the QT case because the search space is larger. 

\begin{Example}
Let $q=7$, $n_1=7$, $n_2=16$, and $a_1=a_2=1$. We have $\gcd(x^7-1,x^{16}-1)=1$.
By letting $g(x)= x^9 + 6x^8 + x^7 + 2x^6 + x^5 + 5x^4 + 3x^3 + x^2 + 2x + 6$, 
$p_1(x) =x^6 + 2x^5 + x^4 + 2x^3 + 4x^2$, and $p_2(x) =2x^6 + 6x^4 + x^3 + x^2 + 6x + 1$ in Corollary \ref{gcd1}, we obtain a code with parameters $[23,7,13]_7$. According to the database \cite{table}, this is the parameters of a best-known code. 
\end{Example}

\begin{Example}
Let $q=7$, $n_1=4$, $n_2=50$, $a_1=2$ and $a_2=3$. We have $d(x)=\gcd(x^4-2,x^{50}-3)=x^2-3$. In this case, the gcd is greater than $1$ but we can still take $g_1(x)=1$ as the generator of a constacyclic code so that $C_1=\langle 1 \rangle$ is the trivial code $[4,4,1]$. Let  $g_2(x)= x^{46} + x^{45} + 5x^{44} + 5x^{43} + 5x^{42} + 3x^{41} + 3x^{39} + 2x^{38} + x^{37} + 4x^{36} + x^{35} + 2x^{34} + 2x^{33} + 5x^{32} + 3x^{31} + 4x^{30} + 6x^{29} + 2x^{27} + 5x^{26} + x^{25} + 6x^{24} + 3x^{22} + 5x^{21} + 5x^{20} + 6x^{19} + x^{17} + 2x^{16} + x^{15} + 5x^{14} + 6x^{13} + 4x^{12} + 6x^{11} + 2x^{10} + 5x^9 + 2x^8 + 2x^7 + 
4x^5 + 6x^4 + 4x^3 + 5x^2 + 3x + 2$. Then $g_2(x)|(x^{50}-3)$ and it generates a constacyclic code $C_2$ with parameters $[50,4,42]$.  The MT subcode generated by $\{x^i\cdot (g_1(x), p_2(x)g_2(x)): 0\leq i\leq 3\}$, where $p_2(x) =x^3 + 2x^2 + x + 5$, which is relatively prime with the check polynomial of $C_2$, has parameters $[54,4,44]_7$. This turns out to be an \textit{optimal} code \cite{table}.
\end{Example}

\begin{Example}
Let $q=5$, $n_1=19$, $n_2=34$, $a_1=2$ and $a_2=2$. The cyclic code $C_1$ of length 19 generated by 
$g_1(x)=x+4$ has parameters $[19, 18, 2]$. The constacyclic code of length 34 and shift constant 2 generated by $g_2(x)= x^{16} + 2x^{15} + 3x^{14} + 2x^{13} + 4x^{12} + 3x^{11} + 2x^9 + 4x^7 + 4x^5 + 4x^4 + 4x^3 + 2x^2 + x + 1$ has parameters $[34, 18, 10]$. By Theorem \ref{mainthm2} we know that an MT subcode generated by $\{x^i\cdot (g_1(x),p_2(x)g_2(x)): 0\leq i\leq 17\}$, where $\gcd(p_2(x),h_2(x))=1$ will have paramaters $[53,18,d]$ with $d\geq 12$.  We have found that for $p_2(x) =3x^{17} + 2x^{16} + 3x^{15} + 2x^{14} + 3x^{13} + 2x^{12} + 4x^{11} + 2x^{10} + 4x^9 + 2x^8 + 3x^7 + 3x^6 + 4x^5 + 4x^4 + 4x^3 + 4x^2 + 1$ the resulting code $C$ has parameters $[53, 18, 21]$, which is the parameters of a best known code \cite{table}. We would like to point out that since $n=53$ is a prime number, it is not possible to obtain a code of length 53 and dimension 18 from the class of QT codes with index greater than 1. Moreover, from the factorization of the polynomial $x^{53}-a$, for any $a\in \mathbb{F}^*_5$, we see that it is not possible to obtain a code with these parameters from the class of constacyclic codes either.  
\end{Example}

\begin{Example}
Let $q=3$, $n_1=13$, $n_2=20$, $a_1=1,$ and $a_2=2$. The 1-generator ternary MT subcode $C$ of dimension 12 generated by
$\{x^i\cdot (g_1(x),p_2(x)g_2(x)): 0\leq i\leq 11\}$, where $g_1(x)=x+2$, $g_2(x)=x^8 + x^7 + 2x^6 + 2x^5 + x^4 + 2x^3 + 2x^2 + x + 1$, and $p_2(x)=2x^{11} + x^{10} + 2x^9 + x^8 + 2x^7 + x^6 + x^5 + 2x^4 + 2x^3 + x^2 + 2x$, has parameters $[33,12,12]$, which means that $C$ is a best-known code for its parameter set \cite{table}. From the factorizations of $x^{33}-1$ and $x^{33}-2$ we observe that neither a cyclic nor a constacyclic code exists for length 33 and dimension 12. Moreover, since $33=11\cdot 3$, a 1-generator QT code with these parameters does not exist either. 

\end{Example}

%% If you have bibdatabase file and want bibtex to generate the
%% bibitems, please use
%%
%%  \bibliographystyle{elsarticle-num} 
%%  \bibliography{<your bibdatabase>}

%% else use the following coding to input the bibitems directly in the
%% TeX file.
\section*{References}

\end{document}